\let\OLDthebibliography\thebibliography
\renewcommand\thebibliography[1]{
  \OLDthebibliography{#1}
  \setlength{\parskip}{0pt}
  \setlength{\itemsep}{0pt plus 0.3ex}
}
\newcommand\numberthis{\addtocounter{equation}{1}\tag{\theequation}}
\newcommand\blfootnote[1]{%
  \begingroup
  \renewcommand\thefootnote{}\footnote{#1}%
  \addtocounter{footnote}{-1}%
  \endgroup
}
\def\nbe{{\mathbf{e}}}
\def\nbf{{\mathbf{f}}}
\def\nbg{{\mathbf{g}}}
\def\nbh{{\mathbf{h}}}
\def\nbp{{\mathbf{p}}}
\def\nbq{{\mathbf{q}}}
\def\nbv{{\mathbf{v}}}
\def\nbw{{\mathbf{w}}}
\def\nb0{{\mathbf{0}}}
\def\nb1{{\mathbf{1}}}
\def\nbE{{\mathbf{E}}}
\def\nbG{{\mathbf{G}}}
\def\nbH{{\mathbf{H}}}
\def\nbZ{{\mathbf{Z}}}
\def\ncalC{{\mathcal{C}}}
\def\ncalN{{\mathcal{N}}}
\def\nbbC{{\mathbb{C}}}
\def\nbbE{{\mathbb{E}}}
\def\nbbP{{\mathbb{P}}}
\newtheorem{thm}{Theorem}
\newtheorem{cor}{Corollary}
\newcommand{\subparagraph}{} 
\begin{document}

\title{Optimal Beamforming and Outage Analysis for Max Mean {\rm SNR} under RIS-aided Communication}
	\author{Kali Krishna Kota, Praful D. Mankar, Harpreet S. Dhillon
	}

	\maketitle
	\thispagestyle{empty}
\pagestyle{empty}
	
\begin{abstract}

This paper considers beamforming for a reconfigurable intelligent surface (RIS)-aided multiple input single output (MISO) communication system in the presence of Rician multipath fading. Our  aim is to jointly optimize the transmit beamformer and RIS phase shift matrix for maximizing the mean signal-to-noise (SNR) of the combined signal received over direct and indirect links. While numerical solutions are known for such optimization problems, this is the first paper to derive closed-form expressions for the optimal beamformer and the phase shifter for a closely related problem. In particular, we maximize a carefully constructed lower bound of the mean SNR, which is more conducive to analytical treatment. Further, we show that effective channel gain under optimal beamforming follows Rice distribution. Next, we use these results to characterize  a closed-form expression for the outage probability under the proposed beamforming scheme, which is subsequently  employed to derive an analytical expression for the ergodic capacity. Finally, we numerically demonstrate the efficacy of the proposed  beamformer solution in comparison with the existing algorithmically  obtained optimal solution for the exact mean SNR maximization.    
\end{abstract}
\begin{IEEEkeywords}
RIS, Optimal Beamforming,  Rician Channel, Outage Analysis, Ergodic Capacity, and Maximum Mean SNR.
\end{IEEEkeywords} 
\IEEEpeerreviewmaketitle 
\blfootnote{K.K. Kota and P. D. Mankar  are with SPCRC, IIIT Hyderabad, India (Email: kali.kota@research.iiit.ac.in,   praful.mankar@iiit.ac.in). H. S. Dhillon is with Wireless@VT, Department of ECE, Virginia Tech, Blacksburg, VA (Email:  hdhillon@vt.edu). H. S. Dhillon gratefully acknowledges the support of US NSF (grant ECCS-2030215). }
\section{Introduction}
RIS is a planar array that consists of many sub-wavelength-sized elements formed of meta-materials. One can control the phases of these elements to change the way they interact with the impinging electromagnetic wave, thereby controlling the local propagation environment to a certain extent \cite{Emil}. As a result, this ability can influence key characteristics of the propagation environment, such as reflection, refraction, and scattering, which were thus far assumed to be uncontrollable in wireless communications systems~\cite{1}. If configured properly, RISs can reduce the effect of fading by redirecting the impinging signals such that they add constructively at the receiver. The advantages of such a technology are low power consumption, high spectral efficiency, improved coverage, and better reliability \cite{2}. Even though the idea of using RIS is conceptually similar to several well-studied technologies, such as the relay transmission, backscatter communication, and massive MIMO, a key differentiating feature is its low-cost implementation and lower power consumption, especially when all the elements are passive (which will be our assumption in this paper). Additionally, RISs operate in the full-duplex model by design without additional hardware requirements, which is not the case in the competing technologies mentioned above~\cite{4}.

Needless to say, including RIS in the communication system design comes with its own unique challenges. Two particularly important ones are the  channel estimation and the joint design of transmit-receive beamformers and the RIS phase shift matrix. The challenges in channel estimation stem from  \cite{5}: 1) the high dimensionality of the channel matrix, and 2) the lack of active elements in passive RISs for aiding the channel estimation process \cite{6}. Therefore, it is not always reasonable to assume perfect CSI at the transmitter (particularly for the channels involving RISs) for optimal beamforming. Moreover, instantaneous CSI feedback and instantaneous RIS reconfiguration will increase the system overhead significantly; as a result, the ability to reconfigure it quickly for precise RIS design will be complex.
 Due to these reasons, it is more practical to assume the statistical knowledge of CSI to optimize RIS-aided communications systems, which is the main theme of this paper.

{\em Related Works:} The authors of \cite{7,8,9} investigate the joint optimal instantaneous transmit beamformer and phase shift matrix design for RIS-aided communication systems while assuming  perfect CSI at the transmitter. Authors in \cite{7,8} propose iterative algorithms such as fixed point iteration and manifold optimization, and branch-and-bound techniques to maximize the {\rm SNR}. Further, \cite{9} proposes a new sum-path-gain maximization criterion to obtain a  suboptimal solution, which is numerically shown to achieve near-optimal RIS-MIMO channel capacity. 
While these works provided useful design insights, they all considered instantaneous CSI knowledge at the transmitter, which is not always feasible. 
Inspired by this, \cite{10,11,12,hu2020statistical} aim to jointly optimize the transmit beamformers and phase shifter using statistical CSI. The authors in \cite{10} maximize the sum rate for a multi-user MIMO system in the presence of spatial correlation. 
The authors of \cite{11} and \cite{12} derive closed-formed approximate expressions for the sum rate and maximize it with respect to the transmit beamformer and RIS phase shifts in downlink and uplink, respectively. 
A fractional programming-based iterative solution is presented in \cite{11}, whereas \cite{12} proposes a genetic algorithm-based solution. Finally, \cite{hu2020statistical} presents an optimal solution that maximizes the mean SNR, or equivalently the upper bound on ergodic capacity, for the RIS-aided MISO system. The solutions are obtained by alternating over the transmit beamformer and phase shift matrix sub-problems for which closed-form expressions are obtained. 
However, such numerical solutions often prohibit further analytical investigation, such as the understanding of the exact functional dependence of the optimal solutions on key system parameters. {\em Therefore, it is equally, or perhaps even more, important to obtain optimal solutions in closed form, which will be our objective in this paper.} 

It is worth noting that the statistical CSI-based design of RIS-aided systems not only circumvents the system design complexity issues but often also makes the analysis tractable while resulting in comparable performance to the instantaneous CSI-based beamforming. 
The performance of the beamformer scheme is usually characterized using the outage analysis. Along these lines, the authors of \cite{14,15,16} analyze the outage performance for RIS-aided communication systems. 
The authors of \cite{14} derive a closed-form expression for the asymptotic outage probability, whereas \cite{15} obtain an approximate channel gain distributions for two cases: 1) RIS-aided system and 2) RIS-at-transmitter system. In addition, \cite{16} also derives a closed-form expression for a tight approximation of the channel gain and provides insight into diversity gain for RIS-aided communication systems.

{\em Contributions:} 
In this paper, we consider a RIS-aided MISO system where the transmitter performs optimal beamforming based on the statistical {\rm CSI} knowledge. While the mean {\rm SNR}-based maximization problem has been numerically solved for this setting in \cite{hu2020statistical}, our main technical contribution is in deriving closed-form expressions for the jointly optimal transmit beamformer and the RIS phase shift matrix for  maximizing a carefully-constructed lower-bound on the mean {\rm SNR}. For this optimal beamformer, we also show that the effective channel gain is Rice distributed, which results in a closed-form expression for the outage probability as well as an analytical expression for the ergodic capacity. Using numerical comparisons, we further demonstrate that the proposed beamforming method provides a tight lower bound on the ergodic capability and a tight upper bound on the outage probability compared to the numerically-obtained solution of the exact mean {\rm SNR} in \cite{hu2020statistical}. 

{\em Notations:} The notations $a^*$ and $|a|$ represent the conjugate and absolute value of $a$. $\left\lVert \mathbf{a} \right\lVert$ is the norm of vector $\mathbf{a}$, whereas $\mathbf{A}^T$, $\mathbf{A}^H$,  $\mathbf{A}_{i,:}$, $\mathbf{A}_{:,i}$ and $\mathbf{A}_{ij}$ are the transpose, Hermitian, $i$-th row, $i$-th column and $ij$-th element of the matrix $\mathbf{A}$, respectively. The notation $\mathbb{C}^{M\times N}$ is the set of  $M \times N$ complex matrices and ${\rm I_M}$ is $M\times M$ identity matrix. $\mathbb{E}[\cdot]$ is the statistical expectation operator and $\mathcal{CN}(\mathbf{\mu},\mathbf{K})$ denotes complex circular Gaussian distribution with mean $\mathbf{\mu}$ and covariance matrix $\mathbf{K}$.
\section{System Model}
This paper considers a RIS-assisted MISO communication system consisting of a transmitter equipped with $M$ antennas, a RIS with $N$ elements, and a single antenna receiver.  We assume that the receiver receives its signal through two links: 1) the direct link from the transmitter and 2) the indirect link through RIS. We assume that the line-of-sight (LoS) components exist in both of these links. To model  multipath fading for such  a scenario, we consider  Rician fading with factor $K$.  Thus, the distribution of the channel state (or CSI) is determined by the LoS components. We assume that the   knowledge of these LoS components is available at the transmitter for  determining the optimal choice of the transmit beam vector and  RIS phase shift matrix. 

The direct link channel coefficient is modeled as
\begin{equation}
    \nbg=\kappa_{\rm l}\bar{\nbg} + \kappa_{\rm n}\tilde{\nbg}, \label{eq:g_model}
\end{equation}
where  $\kappa_{\rm l}=\sqrt{\frac{K}{1+K}}$, $\kappa_{\rm n}=\sqrt{\frac{1}{1+K}}$,  $\tilde{\nbg}\sim\ncalC\ncalN(0,{\rm I_M})$,     $\bar{\nbg}=[1,~e^{j\theta_{\rm DD}},~~ \dots~~,~e^{j(M-1)\theta_{\rm DD}}]^T$, and $\theta_{\rm DD}$ is the departure angle of the direct link.
Similarly, the channel coefficients for transmitter-RIS link and RIS-receiver links are modeled as 
\begin{align*}
     \nbH=\kappa_{\rm l}\bar{\nbH}+\kappa_{\rm n}\tilde{\nbH} \text{~~and~~}
     \nbh=\kappa_{\rm l}\bar{\nbh}+\kappa_{\rm n}\tilde{\nbh},\numberthis \label{eq:hH_model}
\end{align*}
respectively, where $\bar{\nbh}=[1,~e^{j\theta_{\rm DI_2}},~~\dots~~,~e^{j(N-1)\theta_{\rm DI_2}}]^H$ and  $\bar{\nbH}_{nm}=e^{j(n\theta_{\rm AI_1}-m\theta_{\rm DI_1})}$ for $n=0,\dots,N-1$ and $m=0,\dots,M-1$ such that $\theta_{\rm DI_1}$ and $\theta_{\rm DI_2}$ are the departure angles from the transmitter and RIS, respectively, and  $\theta_{\rm AI}$ is  arrival at the RIS. Also, $\tilde{\nbh}\sim\ncalC\ncalN(0,{\rm I_N})$ and $\tilde{\nbH}_{:,i}\sim\ncalC\ncalN(0,{\rm I_N})$.

Let  $\mathbf{f}\in\nbbC^{N}$ and $\bm{\psi}\in\nbbC^{N}$ be the transmit beamforming vector and the RIS phase shift vector, respectively, and  let $\mathbf{\Phi}={\rm diag}(\bm{\psi})$ be the RIS phase shift  matrix. The elements of $\bm{\psi}$ are unit magnitude since RIS  is considered to be passive. 
Considering the far-field propagation scenario \cite{tang2020wireless}, the path loss for the direct and indirect links are modeled as $d_o^{-\alpha}$ and $(d_1d_2)^{-\alpha}$, respectively, where $\alpha$ is the path loss exponent and $d_o$, $d_1$, and $d_2$ are the  link distances of the transmitter-receiver, transmitter-RIS, and RIS-receiver links, respectively.

The signal received at a given time can be written as
\begin{align}
    y =  (d_1d_2)^{-\frac{\alpha}{2}}\nbh^T\mathbf{\Phi}\nbH\nbf x + d_o^{-\frac{\alpha}{2}}\nbg^T\nbf x + w,
\end{align}
where $x$ is the transmitted symbol with transmit power $\nbbE[x^Hx]=P_s$ and  $w$ is AWGN noise with variance $\sigma_w^2$.
Thus, the {\rm SNR} becomes
\begin{align}
    {\rm SNR}= \gamma|\nbh^T\mathbf{\Phi}\nbH\nbf+\mu \nbg^T\nbf|^2,\label{eq:SNR}
\end{align}
where $\gamma=(d_1d_2)^\alpha\frac{P_s}{\sigma_w^2}$ and $\mu=\left(\frac{d_1d_2}{d_o}\right)^{\frac{\alpha}{2}}$.

For the above setting, we aim to design the optimal transmit beamforming vector $\nbf$ and the RIS phase shift matrix $\bm{\Phi}$ so that the mean {\rm SNR} is maximized. For this, we formulate the optimization problem as 
\begin{subequations}
    \begin{align}
    \max_{\nbf,\bm{\psi}}~~ &\gamma\nbbE[|\nbh^T\mathbf{\Phi}\nbH\nbf+\mu \nbg^T\nbf|^2],\label{eq:objfun00}\\
    {\rm s.t.}~~~&\|\nbf\|=1,\label{eq:unitnorm0}\\
    &|\bm{\psi}_k|=1, ~~\forall k=0,\dots,N-1,\label{eq:unitmagnifitute0}%
\end{align}\label{SM_Opti_Prob_Def}%
\end{subequations}
where  \eqref{eq:unitnorm0}  represents the unit norm constraint for the transmit beamformer, and \eqref{eq:unitmagnifitute0} represents the unit magnitude constraint on the passive RIS elements. 

To characterize the performance of statistical CSI-based beamforming, we  will also analyze the {\em outage probability} and {\em ergodic capacity}. The outage probability is defined as the probability that the instantaneous {\rm SNR} is below threshold $\beta$ and is given by
\begin{align}
    {\rm P_{out}}(\beta)=\nbbP[{\rm SNR}\leq \beta].\label{eq:Pout}
\end{align}
The ergodic capacity is the enseble average of the {\rm SNR} and is given by ${\rm EC}=\nbbE[{\rm SNR}]$.
In the next section, we first obtain the optimal beamformer  that maximizes the mean {\rm SNR} and then we  analyze the outage  and ergodic capacity of the obtained beamforming solution. 
 
\section{Optimal Beamforming and Outage Analysis}
In this section, we present our two main contributions: 1) the joint optimization of the statistical-CSI-based beamformer and phase shift matrix and 2) the outage and capacity analyses. We formulate the optimization problem as given in \eqref{SM_Opti_Prob_Def} for maximizing the mean SNR under the beamformer and phase shift matrix constraints to obtain the optimal solution. However, this problem is non-convex due to the coupled beamformer and phase shift matrix variables. So, deriving a closed-form solution for this objective is challenging. Hence, we construct a lower bound on the mean SNR that is more conducive to analytical treatment and facilitates the derivation of the closed-form expressions for the optimal beamformer and phase shift matrix, as discussed next.

\subsection{Statistical CSI-based Beamforming}\label{subsec:optimalbeamforming}
In this subsection, we obtain the optimal transmit beamformer and RIS phase shift matrix that maximizes \eqref{SM_Opti_Prob_Def}.
The mean {\rm SNR} is obtained in   \cite[Appendix A]{hu2020statistical} as
\begin{align*}
    \nbbE[{\rm SNR}]&=\gamma |\kappa_l^2 \bar{\nbh}^T\mathbf{\Phi}\bar{\nbH}\nbf +\mu \kappa_l \bar{\nbg}^T\nbf|^2 + \gamma\kappa_l^2\kappa_n^2\|\bar{\nbH}\nbf\|^2\\
    &~~+\gamma  (\kappa_l^2\kappa_n^2+\kappa_n^4)N + \gamma \mu^2\kappa_n^2.
\end{align*}
Thus, we can rewrite the optimization problem \eqref{SM_Opti_Prob_Def} as
\begin{subequations}
\begin{align}
    \max_{\nbf,\bm{\psi}}~~ &|\kappa_l^2 \bar{\nbh}^T\mathbf{\Phi}\bar{\nbH}\nbf +\mu \kappa_l \bar{\nbg}^T\nbf|^2 + \kappa_l^2\kappa_n^2\|\bar{\nbH}\nbf\|^2,\label{eq:objfun}\\
    {\rm s.t.}~~~&\|\nbf\|=1,\label{eq:unitnorm}\\
    &|\bm{\psi}_k|=1, ~~\forall k=0,\dots,N-1\label{eq:unitmagnifitute}%
\end{align}    
\end{subequations}

For this optimization problem, the alternating transmit beamformer and RIS phase shifter subproblems-based solution is presented in \cite[Algorithm 1]{hu2020statistical}. 
Such numerical solutions do not provide insights into the exact functional dependence of the optimal solution on the key system parameters. 
Therefore,  closed-form expressions for beamformer $\nbf$ and RSI phase shift vector $\bm{\psi}$ are desirable. Towards this goal, we first obtain the optimal $\bm{\psi}$ for a given $\nbf$ in Subsection \ref{Subsection_Psi}, which we use to modify the objective function \eqref{eq:objfun} and solve for optimal $\nbf$ in Subsection \ref{Subsection_fI}.

\subsubsection{Optimal RIS Phase Shift Matrix:} \label{Subsection_Psi}
For a given $\nbf$, the optimization problem for the RIS phase shift matrix becomes 
\begin{subequations}
\begin{align}
    \max_{\bm{\psi}}~~ &|\kappa_l^2 \bm{\psi}^T{\rm diag}(\mathbf{\bar{\nbh}})\nbf +\mu \kappa_l \bar{\nbg}^T\nbf|^2  \label{OPT_RIS_OBJ},\\
    {\rm s.t.}~~~&|\bm{\psi}_k|=1, ~~\forall k=0,\cdots,N-1,\label{OPT_RIS_CON}%
\end{align}  \label{OPT_RIS_OPT} %
\end{subequations} 
where $\bar{\nbh}^T\mathbf{\Phi} = \bm{\psi}^T{\rm diag}(\mathbf{\bar{\nbh}})$. The above objective function can be upper-bounded as
\begin{align}
     |\kappa_l^2 \bm{\psi}^T{\rm diag}&(\mathbf{\bar{\nbh}})\nbf +\mu \kappa_l \bar{\nbg}^T\nbf|^2 \nonumber \\
     &\leq\left[\kappa_l^2 |\bm{\psi}^T{\rm diag}(\mathbf{\bar{\nbh}})\nbf| +\mu \kappa_l |\bar{\nbg}^T\nbf|\right]^2, \label{eq:inequality_1}
 \end{align}
where equality holds when $\bm{\psi}^T{\rm diag}(\mathbf{\bar{\nbh}})\nbf = c  \bar{\nbg}^T\nbf$ for a constant $c$. Thus, to achieve equality, we set 
 \begin{align}
 \bm{\psi}^T=c  \bar{\nbg}^T\nbf \nbw,    \label{eq:psi_equality}
 \end{align}
 where $\nbw=\nbf^H\nbE^H/\|\nbE\nbf\|^2$ is the pseudoinverse of $\nbE\nbf$ and $\nbE={\rm diag}(\bar{\nbh})\bar{\nbH}$. However, the constraint in \eqref{OPT_RIS_CON} also needs to be satisfied. Interestingly, we observe that the  elements of $\nbw$ have equal magnitudes and thus we can set $c$ to ensure \eqref{OPT_RIS_CON}. 
 
 Let $\nbe_n=\nbE_{n,:}$ be the $n$-th row of $\nbE$.
  By  construction of $\bar{\nbh}$ and $\bar{\nbH}$, we see that  $\nbe_n^H\nbe_n=\nbe_m^H\nbe_m$ and thus obtain
 \begin{align}
 \nbf^H\nbe_n^H\nbe_n\nbf=\nbf^H\nbe_m^H\nbe_m\nbf \Rightarrow \|\nbe_n\nbf\|^2=\|\nbe_m\nbf\|^2.\label{eq:enfI_abs}
 \end{align}
 In addition, we also observe that 
\begin{align}
  \|\nbE\nbf\|^2=\sum\nolimits_{n=0}^{N-1}|\nbe_n\nbf|^2=N|\nbe_n\nbf|^2.\label{eq:EfI_norm}
\end{align}
Using \eqref{eq:enfI_abs} and \eqref{eq:EfI_norm}, we get $$|\bar{\nbg}^T\nbf\nbw_k|=\frac{|\bar{\nbg}^T\nbf|}{N|\nbe_n\nbf|}.$$
where $|\nbw_k|=\frac{1}{N|\nbe_n\nbf|}$ for $\forall k$. 
Finally, by substituting the above equation in \eqref{eq:psi_equality}, we obtain the optimal RIS phase shift vector with unit magnitude elements as
  \begin{align}
      {\bm{\psi}^\star}^T=\frac{N|\nbe_n\nbf|}{|\nbg^T\nbf|}\nbg^T\nbf\nbw.\label{eq:psi_optimal}
  \end{align}

\subsubsection{Optimal Transmit Beamforming}\label{Subsection_fI}
The mean {\rm SNR} optimization problem to obtain the optimal transmit beamformer for the optimal $\bm{\psi}^{\star}$ is 
\begin{subequations}
\begin{align} 
    \max_{\nbf}~~ &|\kappa_l^2 \bm{\psi}^{\star^T}{\rm diag}(\mathbf{\bar{\nbh}})\nbf +\mu \kappa_l \bar{\nbg}^T\nbf|^2 + \kappa_l^2\kappa_n^2\|\bar{\nbH}\nbf\|^2,\label{OPT_fI_OBJ}\\
    {\rm s.t.}~~~&\|\nbf\|=1,\label{OPT_fI_Con}
\end{align}   \label{OPT_fI_OPT}  
\end{subequations}
We start by substituting $\bm{\psi}^\star$ in the first term of \eqref{OPT_fI_OBJ} as follows 
 \begin{align*}
     &|\kappa_l^2 {\bm{\psi}^\star}^T\nbE\nbf +\mu \kappa_l \bar{\nbg}^T\nbf|^2\\
     &=\kappa_l^4 |{\bm{\psi}^\star}^T\nbE\nbf|^2 +\mu^2 \kappa_l^2 |\bar{\nbg}^T\nbf|^2+2\kappa_l^3\mu |{\bm{\psi}^\star}^T\nbE\nbf||\bar{\nbg}^T\nbf|,\\
     &=N^2\kappa_l^4|\nbe_n\nbf|^2+\mu^2 \kappa_l^2 |\bar{\nbg}^T\nbf|^2+2N\kappa_l^3\mu |\nbe_n\nbf||\bar{\nbg}^T\nbf|,\numberthis\label{eq:equality_1}
 \end{align*}
where $\nbw\nbE\nbf=1$ follows from \eqref{eq:psi_equality}. 
The second term of \eqref{OPT_fI_OBJ} is simplified as
 \begin{align*}
     \|\bar{\nbH\nbf}\|^2&=\nbf^H\nbH^H\nbH\nbf,\\
     &=\nbf^H\nbH^H{\rm diag(\bar{\nbh})}^H{\rm diag(\bar{\nbh})}\nbH\nbf,\\
     &=\nbf^H\nbE^H\nbE\nbf,\\
     &=N|\nbe_n\nbf|^2,\numberthis\label{eq:HfI_norm}
 \end{align*}
where  the second and last equalities follow from ${\rm diag(\bar{\nbh})}^H{\rm diag(\bar{\nbh})}={\rm I_N}$ and \eqref{eq:EfI_norm}, respectively. Combining \eqref{eq:equality_1} and \eqref{eq:HfI_norm}, the objective \eqref{OPT_fI_OBJ} becomes
\begin{align*}
w_1|\nbe_n\nbf|^2+&w_2|\bar{\nbg}^T\nbf|^2+w_3|\nbe_n\nbf||\bar{\nbg}^T\nbf| = \\
&w_1\nbf^H\nbE_n\nbf+w_2\nbf^H\nbG\nbf+w_3|\nbf^H\nbE_g\nbf|,\numberthis\label{eq:modified_objfun}
\end{align*} 
 where $\nbE_n=\nbe_n^H\nbe_n$, $\nbG=\bar{\nbg}^\ast\bar{\nbg}^T$, $\nbE_g=\nbe_n^H\bar{\nbg}^T$,  $w_1=N^2\kappa_l^4+N\kappa_l^2\kappa_n^2$, $w_2=\mu^2\kappa_l^2$, and $w_3=2N\mu\kappa_l^3$. \newline
It is to be noted that $\nbE_n$ and $\nbG$ are  symmetric and positive semidefinite, whereas the matrix $\nbE_g$ is  negative definite. Thus, the presence of the third term in \eqref{eq:modified_objfun} makes the problem non-convex. For this reason, we ignore the last term from the maximization problem. This new objective will be equivalent to maximizing the lower bound on the mean {\rm SNR}.  It  is to be noted that this lower bound will be tight due to the following two reasons: 1) $w_1, w_2\gg w_3$, and 2) the eigenvalues of $\nbE_n$ and $\nbG$ are larger than the  $|\nbf^H\nbE_g\nbf|$. 

Using the above argument, we simplify the  optimization problem for  maximizing the lower bound of mean {\rm SNR} as 
\begin{subequations}
 \begin{align}
    \max_{\nbf}~~ & \nbf^H\nbZ\nbf,\label{eq:fIZfI}\\
    {\rm s.t.}~~~&\|\nbf\|=1,\label{OP21}
\end{align}\label{OP2}%
\end{subequations}
where  $\nbZ=w_1\nbE_n+w_2\nbG$ (symmetric matrix). This optimization is equivalent to the Rayleigh quotient maximization, whose solution is the dominant eigenvector of $\nbZ$. Finally, we summarize the optimal solution for the transmit beamforming vector and the phase shifter matrix  in the following theorem.
\begin{thm}
\label{Theorem_1}
The statistical CSI-based optimal transmit beamformer and RIS phase shift vector that maximizes the lower bound on the mean {\rm SNR}  (given in \eqref{eq:fIZfI}) are
\begin{align}
    \nbf^*=\nbv_1\text{~~~and~~~~} \bm{\psi}^\star=\frac{N|\nbe_n\nbf^\star|}{|\nbg^T\nbf^\star|}\nbw^T{\nbf^\star}^T\nbg,
\end{align}
respectively, where $\nbv_1$ is the principal eigenvector of $\nbZ$.
\end{thm}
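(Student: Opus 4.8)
The plan is to assemble the theorem from the two nested optimizations set up in Subsections~\ref{Subsection_Psi} and~\ref{Subsection_fI}, with the final step being an appeal to Rayleigh--Ritz. First I would fix an arbitrary feasible $\nbf$ and solve the inner problem \eqref{OPT_RIS_OPT} over $\bm\psi$: apply the triangle inequality \eqref{eq:inequality_1} to the objective and note that equality is attained exactly when $\bm\psi^T{\rm diag}(\bar{\nbh})\nbf$ is a nonnegative real multiple of $\bar{\nbg}^T\nbf$, which the choice \eqref{eq:psi_equality} with $\nbw$ the pseudoinverse of $\nbE\nbf$ realizes while keeping $|\bm\psi^T{\rm diag}(\bar{\nbh})\nbf|$ as large as possible. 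The delicate point is that this $\bm\psi$ must also satisfy the unit-modulus constraint \eqref{OPT_RIS_CON}; here I would invoke the LoS structure of $\bar{\nbh}$ and $\bar{\nbH}$, which forces each row $\nbe_n$ of $\nbE={\rm diag}(\bar{\nbh})\bar{\nbH}$ to satisfy $\nbe_n^H\nbe_n=\nbe_m^H\nbe_m$, hence \eqref{eq:enfI_abs}--\eqref{eq:EfI_norm}, so that every entry of $\nbw$ has the common magnitude $1/(N|\nbe_n\nbf|)$ and the free scalar $c$ can be tuned to make $\bm\psi$ constant-modulus, yielding $\bm\psi^\star$ as in \eqref{eq:psi_optimal}.

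Next I would substitute $\bm\psi^\star$ back into the outer objective \eqref{OPT_fI_OBJ}. Using $\nbw\nbE\nbf=1$ from \eqref{eq:psi_equality} collapses the first term to \eqref{eq:equality_1}, while ${\rm diag}(\bar{\nbh})^H{\rm diag}(\bar{\nbh})={\rm I_N}$ together with \eqref{eq:EfI_norm} collapses the second term to $N|\nbe_n\nbf|^2$ as in \eqref{eq:HfI_norm}. Collecting terms gives the quadratic form \eqref{eq:modified_objfun} with the stated $w_1,w_2,w_3$ and matrices $\nbE_n,\nbG,\nbE_g$. Since $w_1,w_2,w_3\ge 0$ and $|\nbf^H\nbE_g\nbf|\ge 0$, deleting the last term lower-bounds the objective uniformly in $\nbf$, so maximizing $\nbf^H\nbZ\nbf$ over $\|\nbf\|=1$ with $\nbZ=w_1\nbE_n+w_2\nbG$ is precisely the maximization of a lower bound on the mean {\rm SNR}, i.e.\ problem \eqref{OP2}/\eqref{eq:fIZfI}. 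As $\nbZ$ is Hermitian and positive semidefinite, the Rayleigh--Ritz theorem shows the maximizer is the principal eigenvector $\nbv_1$ of $\nbZ$, so $\nbf^\star=\nbv_1$; back-substituting $\nbf^\star=\nbv_1$ into \eqref{eq:psi_optimal} (and transposing) gives the stated $\bm\psi^\star$, which completes the proof.

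I expect the main obstacle to lie in the feasibility verification of the first step, namely making rigorous why the LoS array responses make $\|\nbe_n\nbf\|$ independent of $n$ and hence the pseudoinverse $\nbw$ constant-modulus; once that is in hand, the remainder is bookkeeping plus the standard Rayleigh-quotient argument. A secondary point worth a sentence is justifying that dropping the cross term indeed yields a genuine lower bound rather than a different surrogate — which follows simply from nonnegativity of $w_3$ and of $|\nbf^H\nbE_g\nbf|$ — so that the claim ``maximizes the lower bound on the mean {\rm SNR}'' in the theorem statement is literally correct.
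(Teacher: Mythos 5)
Your proposal is correct and follows essentially the same route as the paper: the inner optimization over $\bm{\psi}$ via the triangle inequality and the pseudoinverse construction (with the equal-row-norm property \eqref{eq:enfI_abs}--\eqref{eq:EfI_norm} of $\nbE$ guaranteeing constant-modulus feasibility), substitution into \eqref{OPT_fI_OBJ} to obtain \eqref{eq:modified_objfun}, dropping the cross term to form the surrogate \eqref{OP2}, and the Rayleigh-quotient argument giving $\nbf^\star=\nbv_1$. Your justification that deleting the nonnegative term $w_3|\nbf^H\nbE_g\nbf|$ yields a genuine lower bound is in fact a cleaner statement than the paper's remark about $\nbE_g$, but it does not change the argument.
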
 
\subsection{Outage and Ergodic Capacity }
\label{subsec:outageanalysis}
In this subsection, we analyze the outage probability and ergodic capacity for the statistical CSI-based optimal beamformer $\nbf^\star$ and  RIS phase shift matrix $\bm{\psi}^\star$ obtained in Theorem \ref{Theorem_1}. 
For the given $\nbf^\star$ and $\bm{\psi}^\star$, the instantaneous {\rm SNR} becomes $${\rm SNR}=\gamma|\bm{\psi}^\star{\rm diag}(\nbh)\nbH\nbf^\star|.$$ 
For the above statistically lower bounded  {\rm SNR} expression, the outage probability and ergodic capacity respectively become
\begin{align}
{\rm P_{out}}(\beta)&=\nbbP[\gamma|\bm{\psi}^\star{\rm diag}(\nbh)\nbH\nbf^\star|\leq\beta],\\
\text{and~~}{\rm EC} &= \mathbb{E}[\log_2(1+\gamma|\bm{\psi}^\star{\rm diag}(\nbh)\nbH\nbf^\star|)].\label{eq:EC}    
\end{align}
A closed-form expression for this outage probability is derived in Theorem. \ref{Theorem_2}. Further, since we consider the proposed lower bound on mean {\rm SNR} for outage analysis, we will provide numerical verification on the efficacy of the derived outage probability in Section \ref{sec:numericalsection}.
\begin{thm}
\label{Theorem_2}
For the optimal transmit beamformer $\nbf^\star$ and phase shifter $\bm{\psi}^\star$ obtained in Theorem \ref{Theorem_1}, the channel gain follows a Rice distribution, which  gives outage probability as
\begin{align}
        {\rm P_{out}}(\beta)=1-Q_1\left(\frac{\nu}{\sqrt{2}\sigma},\frac{\sqrt{{\beta}/{\gamma}}}{\sqrt{2}\sigma}\right)\label{eq:outageprobability_thm2}
\end{align}
where $Q_1(\cdot)$ is a Marcum Q-function, and 
\begin{align*}
\nu&=N\kappa_l^2|\nbe_n\nbf^\star|+\mu\kappa_l|\bar{\nbg}^T\nbf^\star|,\\
\text{and}~~\sigma^2&= N\kappa_n^2(1+\kappa_l^2|\nbe_n\nbf^\star|^2)+\mu^2\kappa_n^2.
\end{align*}
\end{thm}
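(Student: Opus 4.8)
The plan is to compute the distribution of the effective channel coefficient $Y := \bm{\psi}^{\star}{\rm diag}(\nbh)\nbH\nbf^\star$ under the Rician fading model and show it is a complex Gaussian with nonzero mean, so that $|Y|$ is Rice distributed; the outage formula then follows from the known CCDF of a Rice random variable. First I would substitute the Rician decompositions $\nbh = \kappa_l\bar{\nbh} + \kappa_n\tilde{\nbh}$ and $\nbH = \kappa_l\bar{\nbH} + \kappa_n\tilde{\nbH}$ into $Y$ and expand. Using $\bar{\nbh}^T\mathbf{\Phi}^\star = {\bm{\psi}^\star}^T{\rm diag}(\bar{\nbh})$ and $\nbE = {\rm diag}(\bar{\nbh})\bar{\nbH}$, together with the equality ${\bm{\psi}^\star}^T\nbE\nbf^\star = N|\nbe_n\nbf^\star|\,\frac{\nbg^T\nbf^\star}{|\nbg^T\nbf^\star|}$ that is built into the construction of $\bm{\psi}^\star$ in \eqref{eq:psi_optimal}–\eqref{eq:psi_equality} (so that the deterministic LoS part of $Y$ contributes $N\kappa_l^2|\nbe_n\nbf^\star|$ plus the direct-link LoS term $\mu\kappa_l|\bar{\nbg}^T\nbf^\star|$ after aligning phases), I would collect the LoS contribution into the mean $\nu$. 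Note that in the instantaneous SNR expression the direct link term $\mu\nbg^T\nbf^\star$ must be reinstated — I would track it alongside so that the full mean becomes $\nu = N\kappa_l^2|\nbe_n\nbf^\star| + \mu\kappa_l|\bar{\nbg}^T\nbf^\star|$ as claimed.

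Next I would argue that the remaining (random) terms are jointly complex Gaussian. Each such term is linear in the independent Gaussian vectors $\tilde{\nbh}$, $\{\tilde{\nbH}_{:,i}\}$, $\tilde{\nbg}$, so $Y - \nu$ is a linear functional of a Gaussian and hence complex Gaussian with zero mean. The only work is computing its variance. I would compute $\sigma^2 = \tfrac12\mathbb{E}[|Y-\nu|^2]$ term by term: the cross term $\kappa_l\kappa_n\,{\bm{\psi}^\star}^T{\rm diag}(\bar{\nbh})\tilde{\nbH}\nbf^\star$ contributes, using $\|\bm{\psi}^\star\|^2 = N$ and $\|\nbf^\star\|=1$, a variance proportional to $N\kappa_l^2\kappa_n^2|\nbe_n\nbf^\star|^2$-type quantities; the cross term $\kappa_l\kappa_n\,{\bm{\psi}^\star}^T{\rm diag}(\tilde{\nbh})\bar{\nbH}\nbf^\star$ contributes a term of order $N\kappa_l^2\kappa_n^2|\nbe_n\nbf^\star|^2$ by \eqref{eq:enfI_abs}; the pure-scatter term $\kappa_n^2\,{\bm{\psi}^\star}^T{\rm diag}(\tilde{\nbh})\tilde{\nbH}\nbf^\star$ contributes $N\kappa_n^4$; and the direct-link scatter $\mu\kappa_n\tilde{\nbg}^T\nbf^\star$ contributes $\mu^2\kappa_n^2$. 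Grouping these should reproduce $\sigma^2 = N\kappa_n^2(1+\kappa_l^2|\nbe_n\nbf^\star|^2) + \mu^2\kappa_n^2$; I would double-check the factor-of-$N$ bookkeeping against the mean-SNR formula quoted from \cite[Appendix A]{hu2020statistical} as a consistency test, since $\mathbb{E}[|Y|^2] = \nu^2 + 2\sigma^2$ must match it.

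Finally, given $|Y| \sim \mathrm{Rice}(\nu,\sigma)$, its CCDF is $\mathbb{P}[|Y| > r] = Q_1(\nu/\sigma, r/\sigma)$; with the scaling convention in \eqref{eq:outageprobability_thm2} (where the $\sqrt{2}$ appears because the paper writes the per-real-dimension variance as $\sigma^2$), one gets ${\rm P_{out}}(\beta) = \mathbb{P}[\gamma|Y| \le \beta] = 1 - Q_1\!\left(\tfrac{\nu}{\sqrt2\,\sigma}, \tfrac{\sqrt{\beta/\gamma}}{\sqrt2\,\sigma}\right)$, which is the claim.

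The main obstacle I expect is the variance computation: one must be careful that the "constant" $c$ in $\bm{\psi}^\star$ (hence the phase alignment) is a deterministic quantity fixed by the LoS channels and does not interact with the fading averages, and that the cross terms involving $\tilde{\nbh}$ and $\tilde{\nbH}$ are genuinely uncorrelated so their variances simply add. Verifying the independence/orthogonality of these four random contributions, and getting the $N$-dependence of each exactly right using \eqref{eq:enfI_abs}–\eqref{eq:EfI_norm}, is where the real care is needed; everything else is bookkeeping and invoking the standard Rice CCDF.
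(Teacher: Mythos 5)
Your proposal follows essentially the same route as the paper's appendix: decompose $\xi_1+\mu\xi_2$ via the Rician LoS/NLoS split, use the phase-aligning construction of $\bm{\psi}^\star$ to get the deterministic mean $\nu=N\kappa_l^2|\nbe_n\nbf^\star|+\mu\kappa_l|\bar{\nbg}^T\nbf^\star|$, sum the variances of the four independent scatter terms (using \eqref{eq:enfI_abs}--\eqref{eq:EfI_norm} and $\|\nbf^\star\|=1$) to obtain $\sigma^2$, and invoke the Rice CDF/Marcum-$Q$ expression. Only two cosmetic caveats: the double-scattering term ${\bm{\psi}^\star}^T{\rm diag}(\tilde{\nbh})\tilde{\nbH}\nbf^\star$ is bilinear, not linear, in the Gaussians (the paper makes the same Gaussian identification, so this is at the paper's level of rigor), and your per-real-dimension convention $\sigma^2=\tfrac12\mathbb{E}[|Y-\nu|^2]$ (and the check $\mathbb{E}[|Y|^2]=\nu^2+2\sigma^2$) is inconsistent by a factor of two with the total-variance contributions you list and with the paper's $\sigma^2$, which is harmless bookkeeping.
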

\begin{proof}
Please refer to Appendix \ref{app:outageprobability} for the proof.
\end{proof}
Using Theorem \ref{Theorem_2}, we can evaluate the ergodic capacity given in \eqref{eq:EC}  as done in the following corollary.
\begin{cor}
The ergodic capacity of the optimal transmit beamformer and RIS phase shifter given in Theorem \ref{Theorem_1} is 
\begin{align}
    {\rm EC}=\frac{1}{\ln(2)}\int_0^\infty \frac{1}{1+u}Q_1\left(\frac{\nu}{\sqrt{2}\sigma},\frac{\sqrt{{u}/{\gamma}}}{\sqrt{2}\sigma}\right){\rm d}u.
\end{align}
\end{cor}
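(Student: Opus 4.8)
The plan is to convert the ergodic capacity in \eqref{eq:EC} into an integral of the complementary cumulative distribution function (CCDF) of the {\rm SNR}, which Theorem~\ref{Theorem_2} already supplies in closed form. Writing $S={\rm SNR}=\gamma|\bm{\psi}^\star{\rm diag}(\nbh)\nbH\nbf^\star|$, I would first invoke the elementary identity for a non-negative random variable,
\begin{align*}
\log_2(1+S)=\frac{1}{\ln(2)}\int_0^S\frac{{\rm d}u}{1+u}=\frac{1}{\ln(2)}\int_0^\infty\frac{\mathbbm{1}(S>u)}{1+u}\,{\rm d}u,
\end{align*}
which re-expresses the logarithm as an integral of an indicator over $[0,\infty)$.

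Next I would take the expectation of both sides and interchange it with the integral. Since the integrand $\mathbbm{1}(S>u)/(1+u)$ is non-negative and measurable, Tonelli's theorem justifies the swap, and using $\nbbE[\mathbbm{1}(S>u)]=\nbbP[S>u]$ yields
\begin{align*}
{\rm EC}=\nbbE[\log_2(1+S)]=\frac{1}{\ln(2)}\int_0^\infty\frac{\nbbP[S>u]}{1+u}\,{\rm d}u.
\end{align*}
The remaining step is to substitute the CCDF. From Theorem~\ref{Theorem_2}, the outage probability is ${\rm P_{out}}(u)=\nbbP[S\leq u]=1-Q_1\big(\tfrac{\nu}{\sqrt{2}\sigma},\tfrac{\sqrt{u/\gamma}}{\sqrt{2}\sigma}\big)$, so the complementary probability is $\nbbP[S>u]=Q_1\big(\tfrac{\nu}{\sqrt{2}\sigma},\tfrac{\sqrt{u/\gamma}}{\sqrt{2}\sigma}\big)$ with the same parameters $\nu$ and $\sigma$ fixed there. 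Inserting this gives exactly the stated expression.

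The derivation is almost entirely routine; the one place that deserves care is the interchange of expectation and integration, which I would justify by Tonelli rather than Fubini precisely because both factors are non-negative, so no separate integrability hypothesis is required beyond the finiteness of the ergodic capacity itself. A minor consistency check is that the Rician CCDF inherited from Theorem~\ref{Theorem_2} must be evaluated with the identical $\nu$ and $\sigma$; since those are determined by $\nbf^\star$ in that theorem, the substitution is immediate and no further manipulation of the Rician density is needed.
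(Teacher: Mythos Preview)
Your proof is correct; the paper actually states the corollary without proof, merely noting that it follows from Theorem~\ref{Theorem_2}, and your argument is precisely the standard CCDF-integral derivation that the paper implicitly has in mind. One minor point: there is a typo in \eqref{eq:EC} where the direct-link term $\mu\nbg^T\nbf^\star$ appears to be missing from the channel gain, but since you correctly invoke the outage expression from Theorem~\ref{Theorem_2} (which does include both links via $\nu$ and $\sigma$), your substitution is consistent with the intended result.
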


In the following section, we present the numerical verification for the efficacy of the proposed beamforming scheme and its outage probability, and we also discuss its capacity performance under various system settings.   
\section{Numerical Results and Discussion}
\label{sec:numericalsection}
In this section, we present a numerical performance analysis of the proposed beamforming scheme for maximizing the lower bound of mean SNR. We refer to the proposed scheme as {\em Max LB Mean SNR}. We also compare the performance of our proposed scheme with the scheme maximizing exact mean {\rm SNR} using statistical {\rm CSI} \cite{hu2020statistical} and the  scheme maximizing instantaneous SNR using perfect {\rm CSI}. We refer to the former scheme as {\em Max Mean SNR} and the latter one as {\em Max SNR}.
The parameters $\gamma$ and $\mu$, defined below \eqref{eq:SNR}, represent the received indirect link {\rm SNR}  and the square root of the ratio of received SNRs over direct and indirect links, respectively, observed under the SISO channel. Besides,  $\mu$ is crucial for optimizing the transmit beamformer, as can be verified using \eqref{eq:modified_objfun} and the discussion below it. 
To highlight this, we  present the numerical results for various values of $\gamma$,  $\mu$, and the key system parameters.   
For the numerical analysis, we consider $M = 4, N = 32, \theta_{\rm D} = 0, \theta_{\rm DI1} = \frac{1}{4}\pi, \theta_{\rm DI2} = \frac{8}{5}\pi, K = 5, \alpha = 3.5, \gamma=0~\rm{dB},~\text{and}~\mu = 5~\rm{dB}$; unless mentioned otherwise.  
\begin{figure}[h]
\centering
 \includegraphics[width=0.5\textwidth]{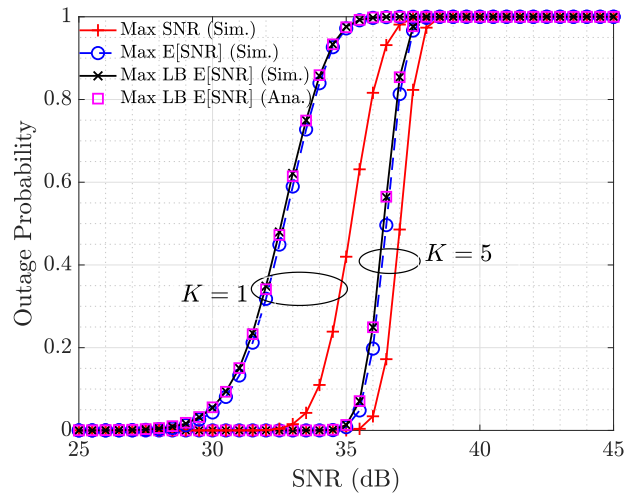} \vspace{-.5cm}
 \caption{Outage Probability.} 
 \label{fig:OP}
\end{figure}

Fig. \ref{fig:OP} shows the outage probability for the proposed and above-mentioned beamforming schemes. As expected, we notice that the outage probability derived for the proposed scheme in Theorem \ref{Theorem_2} matches with the simulation. Further, this result  demonstrates that the outage  of the proposed scheme is a tight upper bound to the outage of the {\em Max Mean SNR} scheme. This is expected as the proposed scheme maximizes the lower bound of the mean {\rm SNR}.  The fact that the proposed scheme is a tight bound also verifies the efficacy of the proposed  {\em Max LB Mean SNR} scheme in solving the original optimization problem. Besides, it can be seen that the {\em Max SNR} scheme outperforms the other two schemes for obvious reasons. However, the performance gap reduces with the increase of Rice fading factor $K$, which is also deducible from Fig. \ref{fig:CapVsN}.  
 
\begin{figure}[h]
\centering
 \includegraphics[width=0.5\textwidth]{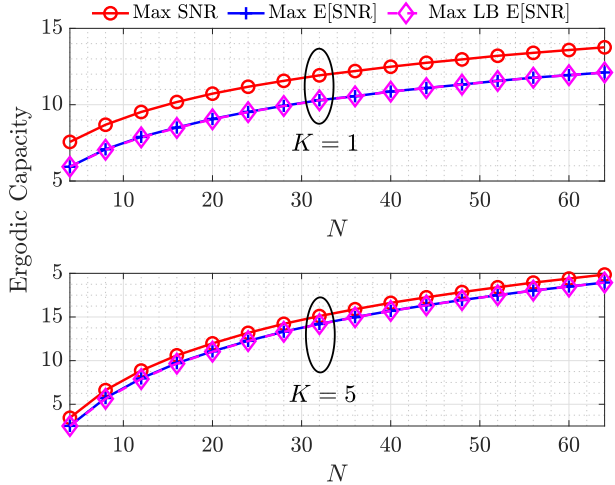} 
 \caption{ Ergodic capacity vs.  number of RIS elements $N$. } 
 \label{fig:CapVsN}
\end{figure}
Fig. \ref{fig:CapVsN} shows that the capacity increases with the increase in the number of RIS reflecting elements $N$, which is expected. The figure also verifies that the proposed {\em Max LB Mean SNR} scheme provides a close lower bound to the capacity achieved via {\em Max Mean SNR} for all values of $N$.  As noticed in the outage result, the ergodic capacity for {\em Max SNR} scheme  also outperforms the statistical-CSI-based schemes for smaller values of $K$. 
Fig. \ref{fig:CapVslambda} shows the capacity increases with the increase of parameter $\mu$. Given $\gamma$, increasing $\mu$ is equivalent to increasing the {\rm SNR} of the direct link which increases the overall capacity.  Besides, the figure shows that the  capacity also increases  with the increase in the parameter $\gamma$, however the observed increment is independent of $\mu$ and also other parameters (which can also be seen from \eqref{eq:EC}).

Fig. \ref{fig:CapVstdBR} presents the ergodic capacity as a function of the difference of departure angles of direct and indirect links from the transmitter, i.e., $\theta = \theta_{\rm DI_1} - \theta_{\rm DD}$. It can be seen that the capacity decreases from its maximum value at $\theta = 0$ to its minimum value at $\theta = \frac{\pi}{2}$. This is expected as both the direct and indirect links get separated with the increase of $\theta$ for which  it becomes difficult to form a narrow transmit beam and hence the gain reduces. However, it can be observed that the capacity gap between {\em Max Mean SNR} scheme and the proposed {\em Max LB Mean SNR} scheme increases sightly for higher $\theta$ and $\mu$. This may be attributed to the fact that the third term (ignored for lower bounding the SNR) in \eqref{eq:HfI_norm} becomes somewhat unavoidable. However, the gap is very small.     
\begin{figure}[h]
\centering
 \includegraphics[width=0.5\textwidth]{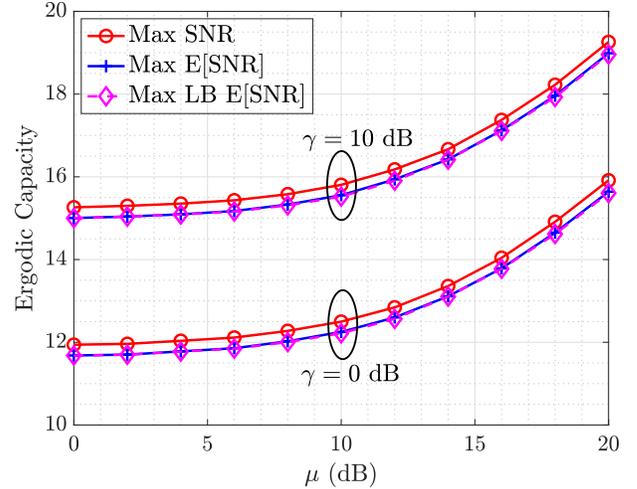} 
 \caption{Ergodic capacity with respect to $\mu$.} 
 \label{fig:CapVslambda}
\end{figure}
\begin{figure}[h]
\centering
 \includegraphics[width=0.5\textwidth]{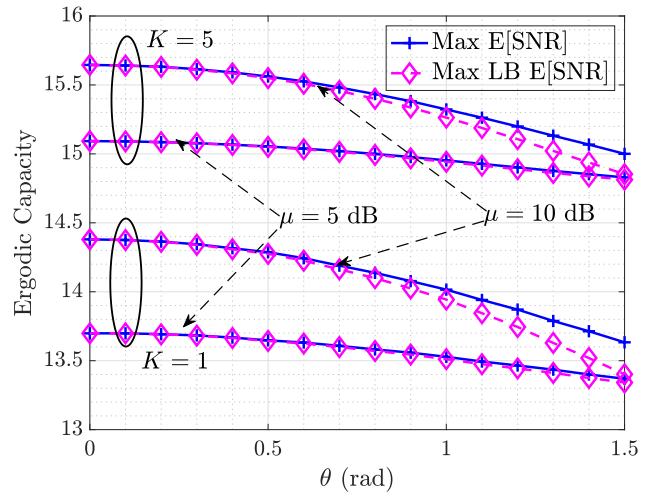} 
 \caption{ Ergodic Capacity vs. the difference of departure angles indirect and direct links.} 
 \label{fig:CapVstdBR}
\end{figure}
\section{Conclusion}
In this paper, we designed a statistical CSI-based optimal beamforming scheme and characterized its outage and capacity performances for a RIS-aided MISO communication system when both direct and indirect link experience Rician fading. Unlike prior art where such problems are often tackled numerically, we derived closed-form expressions for the optimal transmit beamformer and the optimal RIS phase shift matrix that maximize a tight lower bound on the SNR. Besides, we also show that the effective channel gain follows Rice distribution under the proposed beamforming scheme, which immediately leads to a closed-form result for the outage probability. Subsequently, we use this result to derive an analytical expression for the ergodic capacity of the proposed beamforming scheme. Using numerical comparisons, we demonstrate that the proposed closed-form solution acts as a tight upper bound for the numerically-obtained solution for the exact mean SNR maximization-based beamforming scheme. Likewise, the achievable capacity of the proposed beamformer is shown to be a close lower bound to the capacity of the mean SNR maximization-based beamforming scheme.
\appendix 
\label{app:outageprobability} 
Let us define $\xi_1={\bm{\psi}^\star}^T  \rm{diag}(\nbh) \nbH \nbf^\star$ and $\xi_2=\nbg^T\nbf^\star$ and rewrite {\rm SNR}  given in \eqref{eq:SNR} as 
\begin{align*}
   \gamma |{\bm{\psi}^\star}^T  \rm{diag}(\nbh) \nbH\nbf^\star+\mu \nbg^T\nbf^\star|^2 = \gamma |\xi_1+\mu\xi_2|^2,
\end{align*}
Thus, outage can be evaluated as
\begin{align}
    {\rm P_{out}}(\beta)=\nbbP[|\xi_1+\mu\xi_2|\leq \sqrt{\beta/\gamma}]\label{eq:outageprobability_2}.
\end{align}
Using channel models from \eqref{eq:g_model} and \eqref{eq:hH_model}, we obtain
\begin{align}
    \xi_1&=\kappa_l^2a_1 + \kappa_l\kappa_n a_2 +\kappa_l\kappa_n a_3+ \kappa_n^2 a_4,\label{eq:xi1}\\
    \xi_2 &= \kappa_l b_1 + \kappa_n b_2, \label{eq:xi2}
\end{align}
\begin{align*}
\text{where}~~a_1&={\bm{\psi}^\star}^T{\rm diag}(\bar{\nbh})\bar{\nbH}\nbf^\star,~~ &a_2={\bm{\psi}^\star}^T{\rm diag}(\bar{\nbh})\tilde{\nbH}\nbf^\star,\\ a_3&={\bm{\psi}^\star}^T{\rm diag}(\tilde{\nbh})\bar{\nbH}\nbf^\star, ~~ &a_4={\bm{\psi}^\star}^T{\rm diag}(\tilde{\nbh})\tilde{\nbH}\nbf^\star,\\   
b_1&=\bar{\nbg}^T\nbf^\star,~~
&b_2=\tilde{{\nbg}}^T\nbf^\star.~~~~~~~~~~~~~\vspace{.1mm}
\end{align*}
We now find distributions of   $\tilde{\nbp}=\bm{\psi}^\star{\rm diag}(\tilde{\nbh})$, $\tilde{\nbq}=\tilde{\nbH}\nbf^\star$, and $\tilde{u}=\tilde{\nbg}^T\nbf^\star$. These results will be used to identify the distributions of $a_i$s and $b_i$s, which subsequently will be employed to obtain the distributions of the channel gain and {\rm SNR}.  Using the rotational invariance property of zero-mean complex Gaussian distribution, we obtain $\tilde{\nbp}\sim\ncalC\ncalN(0,{\rm I_N})$. Further, by using the linear combination property of zero-mean complex Gaussian random variable and the fact that $\|\nbf^\star\|=1$, we find that $\tilde{\nbq}\sim\ncalC\ncalN(0,{\rm I_N})$ and $\tilde{u}\sim\ncalC\ncalN(0,1)$.

\noindent{\em Terms $a_1$ and $b_1$:} The  terms $a_1$ and $b_1$ are constants as given above. The term $a_1$ can be simplified as 
\begin{align}
    a_1=\frac{N|\nbe_n\nbf^\star|}{|\bar{\nbg}^T\nbf^\star|}\bar{\nbg}^T\nbf^\star.
\end{align}

\noindent{\em Term $a_2$:} Since $\nbp={\bm{\psi}^\star}^T{\rm diag}(\bar{\nbh})$ is a complex vector of unit-magnitude elements, we can show that $a_2=\nbp\tilde{\nbq}$ is a zero-mean complex Gaussian random variable with variance   
\begin{align*}
    {\rm Var}[a_2] = \nbbE[\nbp\tilde{\nbq}\tilde{\nbq}^H\nbp^H]=\nbp\nbbE[\tilde{\nbq}\tilde{\nbq}^H]\nbp^H=\nbp\nbp^H=N.
\end{align*}\\
{\em Term $b_2$:} Note  $b_2=\tilde{u}$ is a zero-mean unit variance complex Gaussian, as mentioned above. \\[2mm]
{\em Term $a_3$:}
With $\nbq=\bar{\nbH}\nbf^\star$, we can show that $a_3=\tilde{\nbp}\nbq$ is also a zero-mean complex Gaussian with variance 
\begin{align*}
    {\rm Var}[a_3]=\nbbE[\nbq^H\tilde{\nbp}^H\tilde{\nbp}\nbq]=\nbq^H\nbbE[\tilde{\nbp}^H\tilde{\nbp}]\nbq=\nbq^H\nbq=\|\bar{\nbH}\nbf^\star\|^2.
\end{align*}

\noindent{\em Term $a_4$:} Using similar arguments, we can show that $a_4=\tilde{\nbp}\tilde{\nbq}$ follows the complex Gaussian with zero mean and variance
\begin{align*}    \nbbE[a_4^Ha_4]&=\nbbE[\tilde{\nbq}^H\tilde{\nbp}^H\tilde{\nbp}\tilde{\nbq}],\\   &=\nbbE[\tilde{\nbq}^H\nbbE[\tilde{\nbp}^H\tilde{\nbp}]\tilde{\nbq}],\\ 
    &=\nbbE[\tilde{\nbq}^H\tilde{\nbq}],\\
    &=\sum\nolimits_{n=0}^{N-1} \nbE[|\tilde{\nbq}_n|^2],\\
    &=N.
\end{align*}

\noindent{\em Distribution of $\xi_1$ and $\xi_2$:} Compiling the above results gives the distributions of $\xi_1$  and $\xi_2$ (given in \eqref{eq:xi1} and \eqref{eq:xi2}) as
\begin{align*}
    \xi_1&\sim\ncalC\ncalN\left(\kappa_l^2\frac{N|\nbe_n\nbf^\star|}{|\bar{\nbg}^T\nbf^\star|}\bar{\nbg}^T\nbf^\star,N\kappa_n^2(1+\kappa_l^2|\nbe_n\nbf^\star|^2)\right),\\
    \xi_2&\sim\ncalC\ncalN\left(\kappa_l\bar{\nbg}^T\nbf^\star,\kappa_n^2\right).
\end{align*}
Using this, we get 
\begin{align*}
    \xi_1+\mu\xi_2&\sim\ncalC\ncalN(m,\sigma^2)
\end{align*}
where $m=\kappa_l^2\frac{N|\nbe_n\nbf^\star|}{|\bar{\nbg}^T\nbf^\star|}\bar{\nbg}^T\nbf^\star+\mu\kappa_l\bar{\nbg}^T\nbf^\star$ and 
$\sigma^2=N\kappa_n^2(1+\kappa_l^2|\nbe_n\nbf^\star|^2)+\mu^2\kappa_n^2$.\\[2mm]
{\em Channel gain distribution:} We note that the magnitude of  a nonzero-mean complex Gaussian follows the Rice distribution. Hence, we obtain $$|\xi_1+\mu\xi_2|\sim{\rm Rice}(\nu,\sigma/\sqrt{2}),$$ where
$\nu=|m|$. Finally, using \eqref{eq:outageprobability_2} and the {\rm CDF} of Rice distribution, the outage probability is obtained as  in \eqref{eq:outageprobability_thm2}.

\bibliographystyle{IEEEtran} 
\bibliography{Reference.bib}
\end{document}